\theoremstyle{plain}
\newtheorem{theorem}{Theorem}[section]
\theoremstyle{definition}
\newtheorem{definition}{Definition}[section]
\DeclareMathOperator{\Tr}{Tr}
\begin{document}

\preprint{APS/123-QED}

\title{A categorical view of Bell's inequalities in quantum field theory}

\author{Rafael Grossi}
\email{rgrossi@usp.br}
\author{João C. A. Barata}%
 \email{jbarata@if.usp.br}
\affiliation{%
 Departmento de Física-Matemática,\\
Instituto de Física - Universidade de São Paulo,\\
R. do Matão 1371,
Cidade Universitária, São Paulo, Brazil
}%

\date{\today}

\begin{abstract}
We propose a generalization of the description of Bell's inequalities in algebraic quantum field theory (AQFT) to the context of locally covariant quantum field theory (LCQFT). We use the functorial formulation of the state space as proposed in the seminal work \cite{BFV03} to show that for the suitable subcategory which consists of all possible admissible quadruples, yields a category of states over such observables which satisfy the Clauser-Horne-Shimony-Holt (CHSH) inequality. Such inequality is trivially preserved under algebraic morphisms and functoriality of the state space asserts us that it is preserved under isometric embeddings between the globally hyperbolic spacetimes where the algebras are defined and which constitute the locally covariant QFT functor.
\end{abstract}

\maketitle


\section{\label{sec:intro}Introduction}

Since its introduction in a pair of papers by John Bell \cite{Bell:1964,Bell:1966}, the set of results that became known as \emph{Bell's inequalities} have been important to our modern understanding of quantum physics and have found direct applications in quantum information and quantum computing. For a recent review of such applications, see reference \cite{review_bell}. In recent collider experiments, violations of Bell's inequalities have been achieved, which indicates the importance of this topic to phenomenology \cite{barr2024quantum}. For a historical overview and a philosophical perspective, see reference \cite{sep-bell-theorem}.

These inequalities allowed physicists to first probe experimentally the validity of a physical description in terms of hidden variables in an atomic or subatomic scale. They consist of statistical correlations of measurements made on two parts of one system and their violation reveals the fundamental difference between classical and quantum regimes, with the presence of entanglement in the latter.

Bell's inequalities have also played a role in understanding entanglement in quantum field theory \cite{RevModPhys_entanglement}. In the context of algebraic quantum field theory \cite{haag}, the work by Summers and Werner \cite{summers&werner_physlett,summers&werner1,summers&werner2,Summers:1987ze,summers&werner_tangent} has shown that the vacuum state violates \emph{maximally} Bell's inequalities, which translates as it being a maximally entangled state. This reveals maximal correlations in the vacuum fluctuations which are known to be a property of quantum field theories as a consequence of the Reeh-Schlieder property \cite{Reeh:1961ujh,witten_entanglement}. Recent numerical work has shown that the inequalities hold in a variety of settings \cite{Dudal:2023mij,DeFabritiis:2023llu,DeFabritiis:2023tkh,DeFabritiis:2024jfy}.

Alongside this development, progress has been made in understanding quantum field theory, in particular, through the mathematical formulation of QFT in terms of a functor between suitable categories. This functorial axiomatization process began in the context of topological quantum field theories (TQFT) in the work by Atiyah \cite{atiyah_tqft} and was introduced in the context of algebraic quantum field theory (AQFT) based on the Haag-Kastler-Araki axioms by Brunetti, Fredenhagen, and Verch \cite{BFV03}. In this paper, we will consider the latter and refer to it by \emph{locally covariant quantum field theory} (LCQFT), as it has become known in the literature.

The LCQFT framework is described by a covariant functor between the category of globally hyperbolic spacetimes with isometric embeddings as morphisms and the category of algebras of observables with structure preserving homomorphisms as morphisms. Such construction is presented as the commuting diagram
\[\begin{tikzcd}
	{(M,\textbf{g})} && {(M',\textbf{g}')} \\
	\\
	{\mathfrak{A}(M,\textbf{g})} && {\mathfrak{A}(M',\textbf{g}')}
	\arrow["\psi", from=1-1, to=1-3]
	\arrow["{\mathfrak{A}}", from=1-3, to=3-3]
	\arrow["{\alpha_\psi}"', from=3-1, to=3-3]
	\arrow["{\mathfrak{A}}"', from=1-1, to=3-1]
\end{tikzcd}\]
where $\psi$ denotes the isometric embedding between manifolds $(M,\mathbf{g})$ and $(M',\mathbf{g'})$ and $\alpha_\psi$ the corresponding morphism (i.e., the algebraic $*$-monomorphism $\gamma: \mathfrak{A}\rightarrow\mathfrak{A}'$, with $\mathfrak{A} \doteq\mathfrak{A}(M,\mathbf{g})$ and $\mathfrak{A}'\doteq \mathfrak{A}(M',\mathbf{g'})$).

In general, it suffices to consider abstract $C^*$-algebras as being the algebra of observables. A more concrete description in terms of Hilbert spaces and bounded operators acting on them can then be obtained via the GNS construction theorem. The main advantage of using such abstract mathematical language to treat QFT is the model-independent character of such description, allowing one to make very general statements about different (free) theories. Moreover, this language allows to interpret quantum fields as natural transformations between suitable functors and to define properly a QFT in general globally hyperbolic spacetimes (for further detail, see the original work in \cite{BFV03}).

The goal of this work is thus to extend the formulation and results of Summers and Werner to this more general framework which can be seen as an extension of the results to include curved spacetimes. In section \ref{sec:space_of_states}, we review the formulation of the space of states in the context of LCQFT, which is also done in terms of a functor between suitable categories. In section \ref{sec:bell_aqft}, we review some of the main points and definitions of the work of Summers and Werner and state without proof the theorem which states that the violation of Bell's inequalities must occur in theories without hidden variables. The categorical formulation of such violations is finally introduced in section \ref{sec:categorical_bell} by the restriction of the algebra of observables to the $C^*$-algebra elements which satisfy the conditions of the theorem and conclude that any QFT which can be formulated in terms of a functor as described in \cite{BFV03} must have violations of Bell's inequalities, extending thus the result of Summers and Werner to more general spacetimes. Finally, in section \ref{sec:speculations}, we speculate on some possible consequences of this generalization and propose some future directions for this work. 

\section{The space of states in LCQFT}\label{sec:space_of_states}
In this section, we provide the formulation of the space of states as a functor from the category of globally hyperbolic spacetimes to the category of states over the algebra of observables of our theory. This was first proposed in \cite{BFV03} and we follow their notation. 

First, we recall the definition of a state in the context of $C^*$-algebras.

\begin{definition}
    Let $\mathfrak{A}$ denote a unital $C^*$-algebra. A \emph{state over} $\mathfrak{A}$ is a linear functional $\omega: \mathfrak{A}\rightarrow \mathbb{C}$ having the property of being positive, i.e., $\omega(A^{*}A) \geq 0$, $\forall A\in\mathfrak{A}$ and normalized, i.e., $\omega(\mathbf{1}_{\mathfrak{A}}) = 1$, $\mathbf{1}_{\mathfrak{A}}$ being the identity in the $C^*$-algebra $\mathfrak{A}$.
    
    The collection of all states over an algebra $\mathfrak{A}$ is called a \emph{state space} $\mathbf{S}$. This construction can be done in terms of a functor.
\end{definition}

\subsection{The category of state spaces}

We define $\mathfrak{Sts}$ as the category whose objects $\mathbf{S}$ are state spaces of a unital $C^*$-algebra $\mathfrak{A}$ which are closed under taking finite convex combinations and operations $\omega (.)\mapsto \omega_A(.) = \omega(A^* . A)/\omega(A^* A)$, $A\in\mathfrak{A}$, $A\neq 0$ and $\omega(A^*A)\neq 0$. In this language, the algebra is given in terms of the functor $\mathscr{A}$, defined in the introduction.

The morphisms between objects $\mathbf{S'}$ and $\mathbf{S}$ are maps $\gamma^*: \mathbf{S'}\rightarrow \mathbf{S}$ which arise as dual maps from the corresponding $C^*$-monomorphisms from the underlying algebras, $\gamma: \mathfrak{A}\rightarrow\mathfrak{A'}$. This map is defined as
\begin{equation}
    \gamma^*\omega'(A) = \omega'(\gamma(A)), \hspace{0.25cm}\omega'\in\mathbf{S'},\hspace{0.25cm} A\in\mathfrak{A}.
    \label{eq:dual_definition}
\end{equation}
Notice the inverted order of the domain and codomain of $\gamma$ and $\gamma^*$.

\subsection{The functor}
As presented in the introduction, one can define a QFT as a functor between a category of spacetimes and a category of algebras of observables (usually taken as unital $C^*$-algebras).

In a similar fashion, one can define the functor which assigns to each spacetime the corresponding space of states.

\begin{definition}
    Let $\mathscr{A}$ be a LCQFT. A \emph{state space for} $\mathscr{A}$ is a contravariant functor $\mathscr{S}$ between $\mathfrak{GlobHyp}$ and $\mathfrak{Sts}$:
    \[\begin{tikzcd}
	{(M,\mathbf{g})} && {(M',\mathbf{g'})} \\
	\\
	{\mathscr{S}(M,\mathbf{g})} && {\mathscr{S}(M',\mathbf{g'})}
	\arrow["\psi", from=1-1, to=1-3]
	\arrow["{\mathscr{S}}"', from=1-1, to=3-1]
	\arrow["{\mathscr{S}}", from=1-3, to=3-3]
	\arrow["{\alpha^*_\psi}", from=3-3, to=3-1]
    \end{tikzcd}\]
    where $\mathscr{S}(M,\mathbf{g})\doteq \mathbf{S}(M,\mathbf{g})$ is the set of states on $\mathfrak{A}(M,\mathbf{g})$ and $\alpha^*_{\psi}$ is the dual map of $\alpha_\psi$. The contravariance property is given by
    \begin{equation}
        \alpha^*_{\Tilde{\psi}\circ\psi} = \alpha^*_\psi\circ\alpha^*_{\Tilde{\psi}}.
    \end{equation}
\end{definition}

We note that the inversion of the domain and codomain of each $C^*$-monomorphism, as observed in the previous section, is reflected in the contravariance nature of the state space functor.

\section{Bell's inequalities in AQFT}\label{sec:bell_aqft}
In this section, we present some of the key concepts in formulating Bell's inequalities (in particular, the so called CHSH inequality \cite{Clauser:1969ny}) in the context of AQFT. The notation and conventions here follow references \cite{summers&werner1, summers&werner2, Summers:1987ze, LANDAU198754, Baez:1987kzw}, where violations of the inequalities in the context of AQFT were first investigated.

\subsection{General formulation}
Bell's inequalities can be described as a constraint in the statistical correlations between measurements performed in two different causally disconnected subregions. In the particular case of quantum field theory, one may consider causally disjoint regions of spacetime where different local algebras can be defined on via the LCQFT functor.

In a more general setting, we consider a \emph{order-unit space} $(\mathfrak{A}, \geq, 1_{\mathfrak{A}})$, that is, a vector space with an ordering and a unit. This ordering will be given by a convex cone:
\begin{equation}
    \mathfrak{A}_+ \equiv \{A\in\mathfrak{A}:A\geq 0\}.
\end{equation}
In the case of unital $C^*$-algebras, we restrict to the set of positive elements, which are the self-adjoint elements of the algebra whose spectrum is a subset of the real positive line. A standard theorem (\cite{bratteli1}, theorem 2.2.10) tells us that an element $A$ is positive if, and only if, there is an element $B$ such that $A = B^*B$.

In this setting, a measurement is formalized as a family $\{A_\alpha\}_{\alpha\in I}$ where the index set $I$ represents the possible outcomes of the measurement and $A_\alpha \in \mathfrak{A}_+$. Thus, we have the condition that
\begin{equation}
    \sum_{\alpha\in I}A_\alpha = 1.
    \label{eq:locality_condition}
\end{equation}
This allows us to interpret the state evaluated at $A_\alpha$, that is, $\omega(A_\alpha)$, as the probability of obtaining $\alpha$ in an experiment, where the preparing device is given by the state $\omega$ and the measuring device is given by the family $\{A_\alpha\}_{\alpha\in I}$.

\subsection{Correlations}

With the basic notation in order, we can now proceed to define the correlations we will use to describe the inequalities in the algebraic setting.

\begin{definition}
     A \emph{correlation duality} consists of two order-unit spaces $\mathfrak{A}$ and $\mathfrak{B}$ together with a bilinear functional $\hat{p}: \mathfrak{A}\times\mathfrak{B}\rightarrow\mathbb{R}$ such that $A\in\mathfrak{A}$ and $B\in\mathfrak{B}$ with $A,B\geq 0$ imply $\hat{p}(A,B)\geq 0$ and $\hat{p}(1_{\mathfrak{A}}, 1_{\mathfrak{B}}) = 1$.
\end{definition}

In the case of $C^*$-algebras, which we specialize to, the functional $\hat{p}$ is given by a state on the larger algebra $\mathfrak{C} = \mathfrak{A}\otimes\mathfrak{B}$,
\begin{equation}
    \hat{p}(A,B)\equiv \omega(AB),
\end{equation}
where the product $AB$ is understood to be taken in $\mathfrak{C}$. Henceforth we shall restrict ourselves to the $C^*$-algebraic setting.

Using the intuition gained in the last section, we can interpret the correlation duality as a prescription which gives the probability (the functional $\omega$) of measuring $A_i$ and $B_j$ in the sites $\mathfrak{A}$ and $\mathfrak{B}$, respectively.

This structure naturally encapsulates a notion of \emph{locality}: bilinearity of $\hat{p}$ alongside equation \eqref{eq:locality_condition} implies that for any pair of measuring devices $\{A_i\}_{i\in I}$, $\{A'_j\}_{j\in J}$ in $\mathfrak{A}$ and any $B\in\mathfrak{B}$, we have
\begin{equation*}
    \sum_i \hat{p}(A_i, B) = \sum_j \hat{p}(A'_j, B) = \hat{p}(1_{\mathfrak{A}}, B).
\end{equation*}

This means that the probability of an outcome at site $\mathfrak{B}$ does not depend on the outcome at $\mathfrak{A}$, an assumption usually taken in the derivation of Bell's inequalities.

In the context of $C^*$-algebras, this locality condition is a condition of Einstein causality, which is naturally present in the Haag-Kastler-Araki axioms of AQFT \cite{haag}.

Finally, we consider measurements with two possible outcomes, say $\{-1, +1\}$ which can be measured by two measuring devices $A_+$ and $A_-$ localized at different regions. In this case, we can establish a one-to-one correspondence with generic elements $A\in \mathfrak{A}$ with $-1_{\mathfrak{A}}\leq A\leq 1_{\mathfrak{A}}$ via $A_{\pm} = \frac{1}{2}(1\pm A)$. This motivates giving the following definition:

\begin{definition}
    Given two $C^*$-algebras $\mathfrak{A}$ and $\mathfrak{B}$, an \emph{admissible quadruple} is a set $\{A_1, A_2, B_1, B_2\}$ with $A_i \in \mathfrak{A}$, $B_i\in\mathfrak{B}$, $i\in \{1,2\}$, $\{\mathfrak{A},\mathfrak{B}\} = \{0\}$ and $-1_{\mathfrak{A}}\leq A_i \leq 1_{\mathfrak{A}}$ and similar for $B_i$. 
\end{definition}

An admissible quadruple is said to satisfy Bell's inequality if 
\begin{equation}
    |\omega(A_1 B_1) + \omega(A_1 B_2) + \omega(A_2 B_1) - \omega(A_2 B_2)|\leq 2\sqrt{2},
    \label{eq:bell_ineq}
\end{equation}
which is the algebraic version of Bell's inequality usually derived assuming that the order-unit spaces are classical.

One can use the linearity of the states to write equation \eqref{eq:bell_ineq} compactly as
\begin{equation}
    \omega(\mathcal{C})\leq 2\sqrt{2},
\end{equation}
where $\mathcal{C} = A_1B_1 + A_1 B_2 + A_2B_1 - A_2B_2$, which is referred to in the literature as Bell's operator.

We present an important theorem based upon references \cite{summers&werner1, LANDAU198754}, which tells us that there are functionals such as in \eqref{eq:bell_ineq} which can violate maximally Bell's inequalities when we consider von Neumann algebras.

\begin{theorem}[Summers, Werner \& Landau]
    Let $\mathfrak{M}$ be a von Neumann algebra with two independent subalgebras $\mathfrak{M}_1$ and $\mathfrak{M}_2$. Then, there exists an admissible quadruple $\{A_1, A_2, B_1, B_2\}$ such that $[A_1, A_2]\neq 0$ and $[B_1, B_2]\neq 0$ (hence satisfying the hypotheses for the first Bell inequality) and a pure state $\omega_0$ over $\mathfrak{M}$ such that we have a \emph{maximal violation of Bell's inequality}:
    \begin{equation}
        |\omega_0(\mathcal{C})| = 2\sqrt{2}.
    \end{equation}
\end{theorem}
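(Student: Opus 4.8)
The plan is to prove the statement in three movements: first pin down the ceiling $2\sqrt{2}$ so that ``maximal violation'' literally means \emph{saturation}, then build a quadruple and a state attaining it inside a two-qubit system sitting in $\mathfrak{M}$, and finally promote that state to a \emph{pure} state on all of $\mathfrak{M}$. For the ceiling I would establish the operator inequality $\mathcal{C}\leq 2\sqrt{2}\,\mathbf{1}$. For symmetries (self-adjoint $A_i,B_j$ with $A_i^2=B_j^2=\mathbf{1}$) this follows from the Landau identity
\begin{equation}
\mathcal{C}^2 = 4\cdot\mathbf{1} - [A_1,A_2]\,[B_1,B_2],
\end{equation}
which uses only $A_i^2=B_j^2=\mathbf{1}$ and the commutativity $[A_i,B_j]=0$ supplied by the independence of $\mathfrak{M}_1,\mathfrak{M}_2$; since $\|[A_1,A_2]\|\leq 2$ and $\|[B_1,B_2]\|\leq 2$, one gets $\|\mathcal{C}\|\leq 2\sqrt{2}$, hence $|\omega(\mathcal{C})|\leq 2\sqrt{2}$ for every state. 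The passage to a general admissible quadruple ($-\mathbf{1}\leq A_i\leq\mathbf{1}$) is routine, since the CHSH expression is affine in each argument over the self-adjoint order interval whose extreme points are exactly the symmetries, so the extremal value is attained there (equivalently, one exhibits the sum-of-squares decomposition of $2\sqrt{2}\,\mathbf{1}-\mathcal{C}$).

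For the saturating construction I would realise the standard two-qubit optimum. Because the conclusion demands $[A_1,A_2]\neq 0$, each $\mathfrak{M}_i$ is non-commutative; in the case of interest, where the local algebras are properly infinite (typically type III), each $\mathfrak{M}_i$ contains a unital copy of $M_2(\mathbb{C})$, equivalently a pair of anticommuting self-adjoint unitaries $s_1,s_2\in\mathfrak{M}_1$ and $t_1,t_2\in\mathfrak{M}_2$. I then set $A_1=s_1$, $A_2=s_2$, $B_1=\tfrac{1}{\sqrt{2}}(t_1+t_2)$, $B_2=\tfrac{1}{\sqrt{2}}(t_1-t_2)$; a direct check gives $B_j^2=\mathbf{1}$ (so $-\mathbf{1}\leq B_j\leq\mathbf{1}$), together with $[A_1,A_2]\neq 0$, $[B_1,B_2]\neq 0$, and
\begin{equation}
\mathcal{C}=\sqrt{2}\,(s_1 t_1 + s_2 t_2),\qquad \mathcal{C}^2 = 4\,(\mathbf{1}+s_1 s_2 t_1 t_2).
\end{equation}
Independence guarantees that $\langle s_1,s_2\rangle$ and $\langle t_1,t_2\rangle$ commute and generate a faithful copy of $M_2(\mathbb{C})\otimes M_2(\mathbb{C})\cong M_4(\mathbb{C})$ inside $\mathfrak{M}$, on which the full state space---including \emph{entangled} states---is available. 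On this $M_4(\mathbb{C})$ the operator $\mathcal{C}$ has spectrum $\{\pm 2\sqrt{2},0,0\}$ with a maximally entangled (Bell) vector $\Phi$ as the $+2\sqrt{2}$ eigenvector, so taking $\omega_0$ to be that vector state on $M_4(\mathbb{C})$ yields $\omega_0(\mathcal{C})=2\sqrt{2}$.

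It remains to extend $\omega_0$ from the subalgebra $\mathfrak{N}\cong M_4(\mathbb{C})$ to a pure state on all of $\mathfrak{M}$, and this is the step I expect to need the most care. The extensions of $\omega_0$ form a non-empty, convex, weak-$*$ compact set, so by Krein--Milman it has an extreme point $\tilde\omega$; the point is that $\tilde\omega$ is automatically pure \emph{on} $\mathfrak{M}$, because if $\tilde\omega=\tfrac{1}{2}(\psi_1+\psi_2)$ then restricting to $\mathfrak{N}$ and using purity of $\omega_0$ forces $\psi_1|_{\mathfrak{N}}=\psi_2|_{\mathfrak{N}}=\omega_0$, keeping $\psi_1,\psi_2$ in the extension set and hence equal to $\tilde\omega$. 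Since $\tilde\omega$ agrees with $\omega_0$ on the subalgebra containing $A_1,A_2,B_1,B_2$, we still have $|\tilde\omega(\mathcal{C})|=2\sqrt{2}$. The subtlety worth flagging is that for type III algebras---the physically relevant case in AQFT---there are no normal pure states, so $\tilde\omega$ is necessarily singular and cannot be given by a vector in the defining representation; it is precisely this abstract pure-state extension argument, rather than any vector-state construction, that delivers the required $\omega_0$.
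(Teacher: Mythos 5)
Your ceiling argument (the Landau identity plus reduction to symmetries) and your pure-state extension via an extreme point of the weak-$*$ compact extension set are both sound; indeed, the extension step is more careful about purity than the paper's own ending, which merely invokes a norm-attaining state. The genuine gap is in your saturating construction. The theorem is stated for an arbitrary von Neumann algebra with two independent (implicitly non-commutative) subalgebras, but your construction needs each $\mathfrak{M}_i$ to contain a \emph{unital} copy of $M_2(\mathbb{C})$, equivalently a pair of anticommuting symmetries. That is a strictly stronger hypothesis, and it can fail: take $\mathfrak{M}_1 = M_2(\mathbb{C})\oplus\mathbb{C}$. This algebra is non-commutative, yet it admits no unital embedding of $M_2(\mathbb{C})$ (the component map into the summand $\mathbb{C}$ would be a character of the simple non-commutative algebra $M_2(\mathbb{C})$), and correspondingly no anticommuting symmetries: if $s_1 = (u_1,\varepsilon_1)$ and $s_2 = (u_2,\varepsilon_2)$ with $\varepsilon_i = \pm 1$, the second component of $s_1 s_2 + s_2 s_1$ is $2\varepsilon_1\varepsilon_2 \neq 0$. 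You do flag the assumption (``properly infinite, typically type III''), but restricting to that case changes the statement being proved; the theorem as written carries no such hypothesis, and the whole point of the Summers--Werner--Landau result is that non-commutativity alone suffices.

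The paper closes exactly this gap with a different construction. Pick non-commuting projections $E,F\in\mathfrak{M}_1$ (these exist in any non-commutative von Neumann algebra), set $T = EF(\mathbf{1}-E)$, observe $T\neq 0$ and $T^2 = 0$, take the partial isometry $V$ from the polar decomposition $T = V|T|$ (so $V^2 = 0$), and put $Z = V^*V + VV^*$, $A_1 = V + V^* + Z - \mathbf{1}$, $A_2 = i(V^* - V) + Z - \mathbf{1}$. These are symmetries ($A_i^2 = \mathbf{1}$) with $\|[A_1,A_2]\| = 2$, even though they need not anticommute: in the example above the construction yields $A_1 = (\sigma_x,-1)$, $A_2 = (\sigma_y,-1)$, whose commutator has norm $2$ but whose anticommutator is $(0,2)$. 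Feeding these, together with the analogous $B_1,B_2\in\mathfrak{M}_2$, into the Landau identity gives $\|\mathcal{C}\| = 2\sqrt{2}$, and your own extension argument then produces the pure state attaining it. So the repair is local: replace your two-qubit ansatz by this polar-decomposition construction and keep the rest of your proof.
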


\begin{proof}
    We begin by showing by construction that we can find the admissible quadruple $\{A_1, A_2, B_1, B_2\}$ satisfying the hypotheses for the fist Bell inequality. Since any element of a von Neumann algebra can be written as a sum of self-adjoint elements, we can always find orthogonal projections which do not commute with one another, which are the spectral projections associated with the self-adjoint elements. If the algebra is non-commutative, these projections must indeed commute with each other.

    Hence, consider $E$ and $F$ as orthogonal projections which do not commute with one another. Define the element $T\in\mathfrak{M}$ by
    \begin{equation}
        T \equiv EF(\mathbf{1} - E).
    \end{equation}
    This element is certainly non-vanishing, otherwise we would have that $EF = EFE \implies FE = EFE \implies EF = FE$, by self-adjointness, which is a contradiction. On the other hand, by an elementary computation, we see that $T^2 = 0$. This means that the range of $T$ is a subset of its kernel.

    Let $T = V|T|$ be the polar decomposition of $T$, with partial isometry $V$, which is contained in the von Neumann algebra. This isometry satisfies $\ker (V) = \ker (T)$ and $\text{ran}(V) = \overline{\text{ran}(T)}$ and from $\text{ran}(T)\subset \ker(T)$, we have that $V^2 = 0$. It is easy to see that the partial isometry satisfies
    \begin{equation}
        VV^*V = V,\hspace{0.2cm} V^*VV^* = V^*,
    \end{equation}
    since $V^*V$ is an orthogonal projection over $(\ker(V)^\perp$. Now, we define
    \begin{equation}
        X\equiv V^*V,\hspace{0.2cm} Y\equiv VV^*,\hspace{0.2cm} Z\equiv X + Y.
    \end{equation}
    The operators $X$ and $Y$ are trivially projection operators and they are orthogonal to one another. From this, it follows that $Z$ is an orthogonal projection (by the orthogonal decomposition theorem). Next, define
    \begin{equation}
        \overline{A}_1 \equiv V + V^*, \hspace{0.2cm} \overline{A}_2 \equiv i(V^* - V).
    \end{equation}
    Again, an elementary computation shows that $\overline{A}_1^2 = \overline{A}_2^2 = Z$ and $[\overline{A}_1, \overline{A}_2] = 2i (Y- Z)$. From the $C^*$-property (recalling that a von Neumann algebra is in particular a $C^*$-algebra), it follows that $||(Y- X)||^2 = ||(Y- X)^2|| = ||X+Y||^2 = ||Z||^2 = 1$, which implies
    \begin{equation}
        ||[\overline{A}_1, \overline{A}_2]|| = 2.
        \label{eq:overline_commutator}
    \end{equation}
    Finally, define
    \begin{equation}
        A_1 = \overline{A}_1 + Z - \mathbf{1},\hspace{0.2cm} A_2 = \overline{A}_2 + Z - \mathbf{1}.
    \end{equation}
    By a simple computation, using $\overline{A}_1^2 = \overline{A}_2^2 = Z$ and equation \eqref{eq:overline_commutator}, it follows that $A_1^2 = A_2^2 = \mathbf{1}$ and that $||[A_1, A_2]|| = 2$. From the first result, it follows that the spectrum of $A_1$ and $A_2$ must be $\{- 1, +1\}$ and from the second it follows that $A_1 \neq \mathbf{1}$ and similar for $A_2$. The proof for $B_1$ and $B_2$ is identical.

    From squaring Bell's operator, it follows that
    \begin{equation}
        ||\mathcal{C}|| = 2\sqrt{1 + \frac{1}{4}||[A_1, A_2]\,[B_1, B_2]||} = 2\sqrt{2},
    \end{equation}
    with $A_1, A_2, B_1, B_2$ defined as the construction above. By a standard state norm correspondence lemma, there is a state $\omega_0$ such that
    \begin{equation}
        |\omega_0(\mathcal{C})| = 2\sqrt{2},
    \end{equation}
    thus completing the proof.
\end{proof}

\subsection{Violation of the inequalities in the vacuum}
In \cite{summers&werner2}, the authors prove a surprising result: in the Wedge regions in Minkowski spacetime, observables taken in the vacuum state $\omega_0$ violate \emph{maximally} Bell's inequalities, meaning that
\begin{equation}
    \beta(\omega_0,\mathfrak{A}(\mathscr{W}),\mathfrak{A}(\mathscr{W}')) = \sqrt{2},
\end{equation}
where $\mathscr{W}$ and $\mathscr{W}'$ are each other's causal complement and are called \emph{complementary wedges}. Concretely:
\begin{equation}
    \mathscr{W} = \{x\in M : x_1 > |x_0|\},\quad \mathscr{W}' = \{x\in M : -x_1 > |x_0|\}
\end{equation}
where $M$ denotes Minkowski spacetime, i.e., $\mathbb{R}^4$ equipped with the Minkowski metric tensor.

In the case of the free boson quantum field of mass $m\geq 0$, $\phi(.)$, the vacuum state can be defined from the bilinear symmetric form $q$ on the test function space $C_0^{\infty}(M)$ which acts as the domain for the field:
\begin{equation}
    \omega_0(W(f)) = \exp\left({-\frac{1}{4}q(f,f)}\right),\quad f\in C_0^{\infty}(M),
\end{equation}
where $W(f) = \exp{i\phi(f)}$ is the unitary Weyl operator, which defines the canonical commutation relations (CCR) of the theory \cite{bratteli2}. In terms of a concrete Hilbert space $\mathfrak{H}$, we assume that there is a cyclic vector $\Omega \in\mathfrak{H}$ such that $\langle\Omega,W(f)\Omega\rangle = \omega_0(W(f))$. Such a vector is guaranteed to exist by the GNS construction theorem \cite{reed&simon}.

We will not go into the details of the original proof for this result and rather take it as a given result for our discussion. We do note, however, that this violation is not exclusive to causal wedge regions: in \cite{Summers:1987ze}, the authors extended the discussion to diamond regions in Minkowski spacetime and include the different classifications of the von Neumann algebras involved \cite{sunder2012invitation}.

\section{Categorical formulation of Bell's Inequalities}\label{sec:categorical_bell}

In this section, we propose a formulation of the preceding results in terms of the functorial space state formulation \cite{BFV03}.

Let $\mathfrak{Bell}$ denote the subcategory of $\mathscr{S}(M,g)$ composed by the states over the algebras generated by admissible quadruples which satisfy Bell's inequalities, as presented in equation \eqref{eq:bell_ineq}. 

Since we constructed the category $\mathscr{S}(M,g)$ as being closed by convex combinations, this subcategory is well defined and it inherits the same morphisms between its objects.

Making use of \eqref{eq:dual_definition}, we obtain trivially
\begin{equation}
    \omega(AB) = (\gamma^*\omega')(AB) = \omega'(\gamma(A)\gamma(B)),
\end{equation}
where we have used the fact that $\gamma$ is a $*$-monomorphism. The first equality is justified by our restriction to the range of $\gamma$ in the algebra $\mathfrak{B}\in\text{ob}(\mathfrak{Bell})$, which defines a surjective map. Thus, we obtain the inequality
\begin{widetext}
\begin{equation}
    |\omega'(\gamma(A_1)\gamma(B_1)) + \omega'(\gamma(A_1)\gamma(B_2)) + \omega'(\gamma(A_2)\gamma(B_1)) - \omega'(\gamma(A_2)\gamma(B_2))|\leq 2\sqrt{2}.
    \label{eq:computation_bell_functor}
\end{equation}
\end{widetext}
In particular, if we take $\mathscr{S}(M,g)$ to be the state space generated over the Weyl algebra over the wedge regions and take the vacuum state, the maximal Bell correlation must be preserved, that is, the supremum is still attained.

This discussion can be specialized to the case of \emph{normal states}. These are the states that can be written as
\begin{equation}
    \omega_{\rho}(AB) = \Tr(\rho\pi(A)\pi(B)),
\end{equation}
where $\pi$ is the GNS representation associated to the state $\omega_\rho$ and $\rho$ is a density matrix (i.e., a trace-class operator on the GNS Hilbert space $\mathcal{H}$ with $\Tr(\rho) = 1$).

In this case, the analysis should happen at the level of the local folia of the GNS representation of such states assuming that such states satisfy a variant of the microlocal spectrum condition, as discussed in \cite{BFV03}. Recall that the \emph{folium} of a representation $\pi$ denoted by $\mathbf{F}(\pi)$ is the set of all normal states which can be written in terms of that representation. We then say that two states $\omega$ and $\Tilde{\omega}$ (or their GNS-representations $\pi$ and $\Tilde{\pi}$) are \emph{locally quasi-equivalent} if
\begin{equation*}
    \mathbf{F}(\pi\circ\alpha_{M,O}) = \mathbf{F}(\Tilde{\pi}\circ\alpha_{M,O}),
\end{equation*}
for all $O\in\mathcal{K}(M,\mathbf{g})$ (the set of all relatively compact subsets of $M$ with causal curves connecting each pair of points) and $\alpha_{M,O} = \alpha_{\imath_{M,O}}$ and $\imath_{M,O}:(O,\mathbf{g}_O)\rightarrow (M,\mathbf{g})$ is the natural embedding.

We also say that a state $\omega$ is \emph{locally normal} to $\Tilde{\omega}$ if
\begin{equation*}
    \omega\circ\alpha_{M,O}\in\mathbf{F}(\Tilde{\pi}\circ\alpha_{M,O}),
\end{equation*}
for all $O\in\mathcal{K}(M,\mathbf{g})$.

It is an easy consequence of the computation in equation \eqref{eq:computation_bell_functor} that since a normal state $\omega_\rho$ violates the CHSH inequality, than so does any locally equivalent or locally normal state $\Tilde{\omega}$. In fact, the simultaneous violation of the inequality by locally equivalent states is a trivial consequence of theorem 2.3 of \cite{Summers:1987ze}, where it is found that any normal state attains the supremum of inequality \eqref{eq:bell_ineq}. For locally normal states, the violation follows from the composition with $\alpha_{M,O}$.

\subsection{Physical Interpretation}
The previous result is illuminating in the following sense. Functoriality of the state space and of the formulation of quantum field theory in globally hyperbolic spacetimes asserts us that the inequalities must be preserved when we move from one state space to the other (or correspondingly, from one globally hyperbolic spacetime to the other). This indicates that \emph{in any quantum field theory that admits a formulation in terms of an algebra of observables, ther must be a set of observables which satisfy Bell's inequalities in the sense of equation} \eqref{eq:bell_ineq}.

Furthermore, if a state violates Bell's inequalities maximally in one theory described by a $C^*$-algebra, then functoriality guarantees that the corresponding state in the target state space must also violate maximally the inequalities. In other words, this indicates that \emph{maximal violations of Bell inequalities are a general feature of quantum field theories}.

These results seem to indicate that the formulation of quantum field theory both as a $C^*$-algebra and as a functor give rise naturally to Bell's inequalities and to violations of such.

\subsection{An example in $d = 1 + 1$ dimensions}

To illustrate the results in this section, we present an example in $1 + 1$ dimensions which can be readily generalized to higher dimensions. We consider the right Rindler wedge region of the two-dimensional Minkowski spacetime $\mathbb{M}_2$, defined by
\begin{equation}
    \mathscr{W} = \{(X^0, X^1)\in \mathbb{M}_2: X^0\leq |X^1|\}.
\end{equation}
In this region and in the corresponding left Rindler wedge, much progress has been made in understanding the states in AQFT (see, for instance, \cite{BisognanoWichmann1,BisognanoWichmann2}). We will be particularly interested in the ``translated Rindler wedge'', denoted as $\mathscr{W} + \epsilon$, obtained by translating $\mathscr{W}$ by a spacelike parameter $\epsilon$ completely contained in $\mathscr{W}$, as depicted in figure \ref{fig:wedge}.

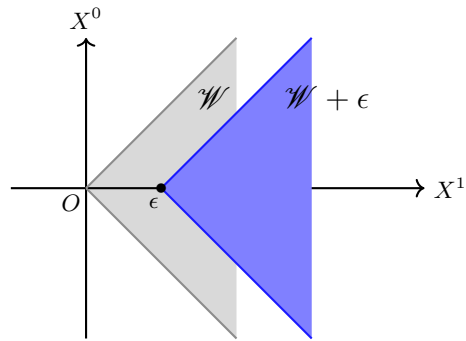
\begin{figure}[!h]
   \begin{tikzpicture}

\fill[fill=gray!30, opacity=0.7] (0,0) -- (2,2) -- (2,-2) -- cycle;  

\draw[thick,->] (-1,0) -- (4.5,0) node[right] {$X^1$};  
\draw[thick,->] (0,-2) -- (0,2) node[above] {$X^0$};  

\draw[thick, color=gray!90] (0,0) -- (2,2);  
\draw[thick, color=gray!90] (0,0) -- (2,-2);  

\node at (1.7,1.2) {\large$\mathscr{W}$};

\fill[fill=blue!50, opacity=0.3] (1,0) -- (3,2) -- (3,-2) -- cycle;  

\draw[thick, color=blue!90] (1,0) -- (3,2);  
\draw[thick, color=blue!90] (1,0) -- (3,-2);  

\node at (3.2,1.2) {\large$\mathscr{W} + \epsilon$};
\node at (1,0) {$\bullet$};
\node at (0.9,-0.2) {$\epsilon$};
\node at (-0.2,-0.2) {$O$};

\end{tikzpicture}

    \caption{A depiction of the Rindler wedge $\mathscr{W}$ (in grey) translated by $\epsilon$ from the origin $O$ (in blue).}
    \label{fig:wedge}
\end{figure}

Now, as we are considering our spacetime to be $\mathbb{M}_2$, spatial translations by $\epsilon$ are an isometry and thus, by functoriality, have a corresponding algebraic monomorphism $\alpha_\epsilon = \gamma_\epsilon$. This morphism, in turn, induces the dual map $\alpha^*_\epsilon = \gamma^*_\epsilon$ between the state spaces, as equation \eqref{eq:dual_definition}. Hence, by the computation in \eqref{eq:computation_bell_functor}, for a state that (maximally) violates Bell's inequalities in the wedge $\mathscr{W}$, there is a ``corresponding'' state on the algebra located in $\mathscr{W} + \epsilon$ which also (maximally) violates the inequalities.

\section{Speculations and Outlook}\label{sec:speculations}
In this section, we propose some paths that can be taken given the previous results. We mention that this simple result can be further explored by considering more modern formulations of Bell's inequalities in algebraic quantum field theory \cite{Summers:1987ze} using the Murray-von Neumann classification of factors \cite{murray_vonNeumann}, Connes classification of type III factors \cite{connes_factors} and more general spacetime regions.

\subsection{Non-commutative spacetimes}

There is a close relationship between non-commutative $C^*$-algebras and non-commuta\-tive spacetimes in the sense that any such algebra uniquely generates such a spacetime \cite{connes1,connes2,Aschieri:2009zz}.

This equivalence may allow us to translate the algebraic form of Bell's inequalities (equation \eqref{eq:bell_ineq}) into a ``geometric inequality'' in the domain of the LCQFT functor. In other words, one could take the domain category of the LCQFT functor to be composed of non-commutative spacetimes as objects and associate them uniquely with the $C^*$-algebra of the codomain of such functor, meaning we define our domain category in terms of the codomain. A corresponding inequality for the spacetime might be an indication that, at least for this class of manifolds, Bell's inequalities are fundamentally related to the structure of such spacetimes.

This might be a step towards the research programs which treat spacetime as emergent from quantum correlations \cite{Seiberg:2006wf,PhysRevD.95.024031} or towards the program known as the ``It from Qubit'' program \cite{Maldacena:2013xja,VanRaamsdonk:2010pw}. This idea has already been explored in \cite{Doplicher1995} where the connection with von Neumann algebras is made more explicit in the form of ``spacetime uncertainty relations''.

The main challenge with this idea is to prove that the category of non-commutative spacetimes (say, $\mathfrak{NonCommSp}$) is a suitable domain category for the LCQFT functor, in the sense proposed by Benini et. al. \cite{Benini:2017fnn} and generalized by Grant-Stuart \cite{Grant-Stuart:2022xdh}. In particular, since the concept of global hyperbolicity is senseless in the context of non-commutative spacetimes, there is still a challenge to define the morphisms in this hypothetical category after choosing which structures should be preserved and are relevant in the context of quantum field theory.

\subsection{Conformal invariance}
Conformal invariance has played a significant role in theoretical physics \cite{DiFrancesco:1997nk}. In particular, its extensive use in quantum field theory has allowed insights into quantum gravity and strongly coupled systems \cite{Maldacena:1997re,Witten:1998qj}.

In two dimensions, the wedge regions and causal diamond regions given by $X^1 > 0$, $X^+ \geq 0$, $X^-\geq 0$ (with $X^{\pm} = X^1 \pm X^0$) and $|x| + |t| \leq R$ respectively (see figure \ref{fig:conformal}) are related by a conformal transformation (\cite{haag}, section V.4.2), namely
\begin{equation}
    x^\mu = 2R\frac{X^\mu - b^\mu X^2}{1 - 2b\cdot X + b^2 X^2} + Rb^\mu,
\end{equation}
where $b^\mu = (0,-1)$.

\begin{figure}[h!]
    \begin{tikzpicture}

\fill[fill=yellow!30] (0,0) -- (2,2) -- (2,-2) -- cycle;  

\draw[thick,->] (-0.5,0) -- (2,0) node[right] {$X^1$};  
\draw[thick,->] (0,-2) -- (0,2) node[above] {$X^0$};  

\draw[thick] (0,0) -- (2,2);  
\draw[thick] (0,0) -- (2,-2);  

\node at (1.3,0.4) {\huge$\mathcal{R}$};  
\node at (1.3,2.2) {$X^0 = X^1$};  

\fill[fill=yellow!30] (4.5,0) -- (6,1.5) -- (7.5,0) -- (6,-1.5) -- cycle;  

\draw[thick,->] (4,0) -- (8,0) node[right] {$X^1$};  
\draw[thick,->] (6,-2) -- (6,2) node[above] {$X^0$};  

\draw[thick] (4.5,0) -- (6,1.5) -- (7.5,0) -- (6,-1.5) -- cycle;  

\node at (6.5,0.4) {\huge$\mathcal{D}$};  

\draw[<->, thick] (2.8,0) -- (3.6,0);  

\end{tikzpicture}
    \caption{The relation between the (Rindler) wedge region and a causal diamond region is given by a conformal transformation.}
    \label{fig:conformal}
\end{figure}
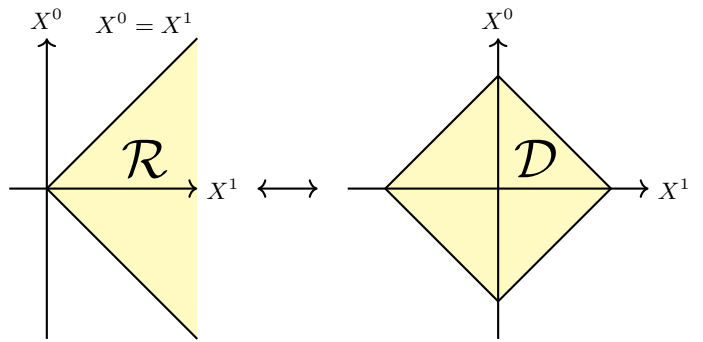
The generalization of the work in \cite{summers&werner1} and \cite{summers&werner2} to causal diamond regions in \cite{summers&werner_tangent}, together with the functorial formulation presented in this paper suggest that it might be possible to extend the morphisms in the domain category of the LCQFT functor (i.e., the $\mathfrak{GlobHyp}$ category of globally hyperbolic spacetimes) to include conformal transformations.

This generalization would be useful in proving the consistency of the formulation presented here by showing that it leads to the same result obtained in \cite{Summers:1987ze}. It would perhaps also be useful for describing holography and phenomena in conformally related regions in terms of locally covariant quantum field theory.

\subsection{Cosmology}

It is speculated in some inflationary models of the early universe that there was an initial singularity \cite{Hawking_Ellis_1973}. This would lead to a topological defect in the beginning of spacetime, making the globally hyperbolic assumption of LCQFT invalid for describing our world. Nevertheless, violations of the inequalities have been successfully verified \cite{Clauser:1969ny, Clauser:1974tg, Clauser:1978ng}.

If we would expect that LCQFT is only feasible in globally hyperbolic spacetimes, then this would entail that there could be no topological defects in the universe. Furthermore, the experimental verification of the violation of Bell's inequalities together with the functorial formulation presented in this paper which assumes global hyperbolicity would be an indication against the hypothesis of an initial singularity.

\section{Conclusions}
We have introduced a simple but effective formulation of Bell's inequalities and their violation in general spacetimes by means of category theory and the functorial formulation of quantum field theory. In this formulation, it is clear that the inequalities are always present in a theory so long as they can be formulated in terms of an algebra of observables. This is an indication of the fundamental role played by entanglement in the mathematical structure of quantum physics.

Whether or not any quantum field theory (described by a Lagrangian or not) can be formulated non-perturbatively in terms of a suitable algebra (being a $C^*$-algebra, von Neumann algebra or other types) is still an open research question. The generality of the functorial formulation due to Brunnetti, Fredenhagen and Verch \cite{BFV03} might allow for more general objects such as loops or strings to be described by this framework, thus allowing Bell's inequalities to be a probe for such theories.

Furthermore, we proposed some research topics that can be explored using the ideas presented in this paper. New proofs for the validity of the inequalities and explicit calculations for their violations using commutative diagrams is also a route to better understand the ideas presented here.

\begin{acknowledgments}
RG is supported by CAPES, grant number 88882.461730/2019-01.
\end{acknowledgments}

\appendix

\nocite{*}

\bibliography{biblio}

\end{document}